\DeclareFontFamily{U}{dmjhira}{}
\DeclareFontShape{U}{dmjhira}{m}{n}{ <-> dmjhira }{}
\newcommand{\T}{\mathbb{T}}
\newcommand{\Z}{\mathbb{Z}}
\newcommand{\R}{\mathbb{R}}
\DeclareMathOperator{\Hom}{Hom}
\DeclareMathOperator{\Tor}{Tor}
\DeclareMathOperator{\Ext}{Ext}
\theoremstyle{definition}
\newtheorem{theorem}{Theorem}[section]
\newtheorem{corollary}[theorem]{Corollary}
\newtheorem{proposition}[theorem]{Proposition}
\theoremstyle{remark}
\newtheorem{remark}[theorem]{Remark}
\title{Cohomology of knotted semimetals in three dimensions}
\author{Joshua Celeste}
\date{December 2023}
\begin{document}
	
\maketitle

\begin{abstract}
	We extend the topological classification scheme of Weyl semimetals via cohomology and the Mayer-Vietoris sequence to account for nodal line semimetals with space-time inversion symmetry. These are semimetals where bands meet generally in 1-dimensional submanifolds, which can generally be knots in $\T^3$. These nodal loops have two charges, the quantized Berry phase and the $\Z_2$-monopole charge, the second related to linking numbers of nodal knots between bands. We provide a manifestly topological proof of the Weyl charge cancellation condition for the $\Z_2$ monopole charge, which is known to be the second Stiefel-Whitney class of a tubular neighbourhood surrounding a Weyl submanifold via the Mayer-Vietoris sequence.
\end{abstract}

\section*{Acknowledgements}
This work is part of ongoing work towards a PhD thesis, as such I would like to acknowledge the financial support of the University of Adelaide, and thank my principal supervisor Mathai Varghese for helpful discussions throughout the production of this work.

\section{Introduction}
Weyl semimetals are systems that realise Weyl fermions in condensed matter systems, these were predicted in \cite{wanTopologicalSemimetalFermiarc2011a} and have gained much interest since their experimental discovery, such as in \cite{xuDiscoveryWeylFermion2015b}. Much of the initial focus has been on semimetals with band crossings at Weyl points, these systems exhibit interesting experimental signatures when a boundary is added, such as the presence of Fermi arcs joining projections of Weyl points on the boundary. However, there is an increasing focus on systems with band crossings at higher dimensional submanifolds, such as nodal semimetals, these systems also exhibit interesting surface states on the boundary, namely drumhead states and hinge states, and are expected to have interesting transport properties \cite{burkovTopologicalNodalSemimetals2011,matsuuraProtectedBoundaryStates2013,fangTopologicalNodalLine2015a,zhaoSymmetricRealDirac2017}.

Here we extend the previous work of \cite{mathaiDifferentialTopologySemimetals2017,mathaiGlobalTopologyWeyl2017a} to further account for certain classes of nodal semimetals, in particular nodal semimetals with space-time inversion symmetry $I_{ST}$ present. These were first described in \cite{fangTopologicalNodalLine2015a}, these systems possess nodal lines with two charges, the quantized Berry phase around each nodal loop, and a second charge known in the physics literature as the $\Z_2$-monopole charge. However, it was shown in \cite{ahnStiefelWhitneyClasses2019} that these simply correspond to the first and second Stiefel-Whitney numbers of the valence bundle for certain submanifolds related to the nodal lines. In this work, the topological properties of such nodal semimetals are analysed via the Mayer-Vietoris sequence. Consequently, we obtain a topologically manifest proof of the charge cancellation condition of the $\Z_2$ monopole charge for nodal semimetals via the Mayer-Vietoris sequence.

\section{Topology and semimetal phases}
\subsection{Topological invariants of insulating phases}
For any system amenable to study via the Bloch-Floquet transform, with some sort of a band gap present, the usual approach is to view the system as a family of Hamiltonians acting on a family of finite dimensional (usually complex) vector spaces that live over each point in the Brillouin zone.  We denote the Brillouin zone generally by $T$, although it is often an $n$-dimensional torus, in which case we shall simply write it as $\T^n$. When there is a band gap present everywhere at the Fermi level, we say the phase is \textit{insulating}, otherwise it is \textit{semimetallic}. We are interested in the behaviour of the system only up to the Fermi energy, the projection of the Hamiltonian up to the Fermi level, then defines a family of projections of which act on the aforementioned family of vector spaces, giving rise to another, possibly topologically non-trivial family of vector spaces. This is precisely the notion of a vector bundle, and a known tool to assist in distinguishing such objects are \textit{characteristic classes}. Such objects are elements of \textit{cohomology groups} associated to the Brillouin zone and provide topological obstructions to continuously deforming one family of Hamiltonians to another.

Many invariants of topological insulators and semimetals are cohomological in nature, and are typically derived from some expression of the Berry curvature. For instance, 2D Chern insulators are classified by the first Chern number which is simply the integral of the Berry curvature over the Brillouin zone. A more topological picture occurs when we view this instead as the pairing between the first Chern class $c_1(V)$ of the complex line bundle determined by the valence band, and the fundamental class of the Brillouin zone $[T]$. In this view, the Chern class lives in the \textit{cohomology group} $H^2(T;\Z)$ while the fundamental class $[T]$ is an element of the \textit{homology group} $H_2(T;\Z)$. 

Generally, there is a pairing between cohomology and homology groups $\braket{\cdot,\cdot}:H^n(T;\Z)\times H_n(T;\Z)\rightarrow \Z$, in particular for cohomology classes that are representable in \textit{deRham cohomology}, such as the first Chern class, this pairing is simply given by integrating an appropriate representative differential form (for instance the Berry curvature) over an oriented submanifold that represents the desired homology class in $H^n(T;\Z)$. Hence for a 2D Chern insulator, the Brillouin zone itself has a fundamental class that lives in $H^2(T;\Z)$, and its pairing with the first Chern class of the valence bundle gives rise to the first Chern number.

Since the valence line bundle is a complex line bundle bundle and such objects are known to be classified up to isomorphism by their first Chern class, it follows that the first Chern number is the only such cohomological invariant for the Chern insulator in 2 dimensions. However, in more complicated systems, higher dimensional invariants can be present, and if additional symmetries are allowed, the relevant cohomology groups may not be the ordinary integral cohomology groups, and \textit{equivariant cohomology} is required instead \cite{freedTwistedEquivariantMatter2013b,thiangKTheoreticClassificationTopological2016a}.

It is of note that often one does not simply wish to classify vector bundles up to isomorphism, but instead by the weaker notion of \textit{stable isomorphism}. In such a case there are other classification schemes, such as those involving K-theory, which gives rise to the known classification of topological insulators \cite{kitaevPeriodicTableTopological2009a}.

Not every invariant of a band system lives in cohomology or K-theory however. For instance, 2D semimetals, such as graphene have the Berry phase around a closed loop in the Brillouin zone being a topological invariant \cite{vanderbiltBerryPhasesElectronic2018}. However, this has no cohomological interpretation, instead it is better viewed as the holonomy of the Berry connection, more refined invariants such as these are elements of more exotic cohomology theories such as \textit{differential cohomology} \cite{satiAnyonicTopologicalOrder2023a} which we shall not describe here.

\subsection{Invariants of phases with a band intersection}
One can consider more general systems where a band gap is not present everywhere over the Brillouin zone and hence two bands intersect, usually at the Fermi energy. We shall refer to these points as a \textit{band intersection} or \textit{band crossing}, which determines some submanifold $W$ of the Brillouin zone $T$ called the \textit{Weyl submanifold} of the system. Due to the presence of a band intersection at the Fermi level, the projection operators obtained from the family of Hamiltonians are not defined everywhere, only away from $W$, thus we only obtain a vector bundle over $T\setminus W$. As a consequence, for a cohomological study of these phases of matter, the appropriate groups to study are the cohomology groups $H^n(T\setminus W)$. Even when the Brillouin zone is the torus $T=\T^n$, determining such cohomology groups depends on $W$ and can be quite non-trivial.

\subsubsection{Topology of Weyl Semimetals}
In this section we briefly discuss the topological classification of Weyl semimetals as seen in \cite{mathaiDifferentialTopologySemimetals2017,mathaiGlobalTopologyWeyl2017a} These are semimetallic phases where band crossings occur at points in momentum space. The case of most relevant interest is a simple two-band model in three dimensions. Consider a family of Hamiltonians of the form
$$H(k)=\mathbf{h}(k)\cdot \sigma$$
where $\mathbf{h}(k)$ is some smooth function $\mathbf{h}:\T^{3}\rightarrow\R^3$ and $\sigma=(\sigma_x,\sigma_y,\sigma_z)$ is a vector containing the three Pauli matrices. The spectrum is given by $\pm|\mathbf{h}|$ and so if one sets the Fermi level to be zero, band crossings occur at the zeroes of $\mathbf{h}$, which are generically points, such points are called \textit{Weyl points}. We assume for this section therefore that $\mathbf{h}$ has finitely many isolated zeroes $k_i$ with the Weyl submanifold being their union $W=\bigcup_{i}k_i$.

Although a band gap does not occur everywhere, the system is gapped along submanifolds of the Brillouin zone which avoid $W$ and thus the Berry curvature is well defined away from $W$. For each Weyl point, one defines a \textit{local charge} by integrating the Berry curvature on a sphere around a point. Alternatively, one can normalise the map $\mathbf{h}$ to give a map $\widetilde{\mathbf{h}}:\T^3\rightarrow S^2$. Restricting $\widetilde{\mathbf{h}}$ to a small sphere around each Weyl point $k_i$ gives a collection of maps $\widetilde{\mathbf{h}}_i:S^2\rightarrow S^2$, each map naturally having a degree which can be viewed as the local charge of the $i$th Weyl point. However, it can be shown this is related to a more cohomological picture. Namely, that the local charge of $k_i$ can be viewed as the pullback under $\widetilde{\mathbf{h}}_i$ of the first Chern class of the hyperplane line bundle over the surrounding sphere $S^2$.

These local charges provide obstructions to being able to continuously deform the Hamiltonian so that it becomes a gapped system, and thus a topological insulator. Furthermore, we can consider the restriction of the valence bundle to subtori of $\T^3$, typically coordinate subtori defined by fixing the value of one of the coordinates $k_x,k_y$ or $k_z$ of the Brillouin zone. These have three independent first Chern classes  $c_1(V)_x$, $c_1(V)_y$ and $c_1(V)_z$ , or equivalently, Chern numbers $c_{1x}, c_{1y},c_{1z}$ by pairing the cohomology classes with the coordinate subtori, which remain constant due to continuity, until a Weyl point is passed. Due to Stokes' theorem, there is a jump in the Chern numbers by the value of the local charge passed. It is also this behaviour which gives rise \textit{Fermi arcs} when a boundary is introduced that join the Weyl points. The topological nature of these were also studied in \cite{mathaiDifferentialTopologySemimetals2017,mathaiGlobalTopologyWeyl2017a} via Euler chains.

 However, as their name suggests, local charges are only local in nature, and do not provide the full picture. For the local charges to be Chern classes of the valence bundle restricted to their surrounding spheres, a global consistency condition is required which can be derived from the Mayer-Vietoris sequence in cohomology \cite{bottDifferentialFormsAlgebraic2008}. The Mayer-Vietoris sequence relates the spaces $\T^3\setminus W$, a tubular neighbourhood of $W$, which we shall denote by $D_W$ and the their intersection, which is a space homotopic to the spheres surrounding the Weyl points described above, denoted $S_W$. We obtain an exact sequence of maps, and of particular interest is the end of the sequence, namely
$$\dots\rightarrow H^{2}(\T^3)\xrightarrow{\iota_{\T^3\setminus W}^*}H^2(\T^3\setminus W)\xrightarrow{\iota_{S_W}^*} H^2(S_W)\xrightarrow{\Sigma} H^3(\T^3)\rightarrow 0$$
where $\iota_{\T^3\setminus W}$ and $\iota_{S_W}$ are the maps on cohomology induced by inclusion. Meanwhile $H^2(S_W)$ decomposes as a direct sum $H^2(S_W)\cong \bigoplus_{i}H^2(S_{(i)})\cong \bigoplus_{i}\Z_{(i)}$ and $H^3(\T^3)\cong\Z$, under these isomorphisms the map $\Sigma$ is simply the sum of the integers in $H^2(S_W)$, thus we view it as a 'sum of charges' map. 

Exactness means that the kernel of one map is the image of the proceeding map, since complex line bundles are classified by the integral second cohomology groups of the base space, it follows that a complex line bundle over $S_W$ (which corresponds to a collection of local charges) is in the image of $\iota_{S_W}^*$ i.e. the restriction of of a complex line bundle over $\T^3\setminus W$ and hence is the line bundle of a Weyl semimetal, if and only if the local charges are in the kernel of $\Sigma$. Since $\Sigma$ can be viewed as the sum of charges, this implies that local charges in a Weyl semimetal must satisfy a charge cancellation condition, that is, the local charges must all sum to zero.
 
 \begin{figure}[h]
 \begin{center}
 	\includegraphics[scale=0.25]{./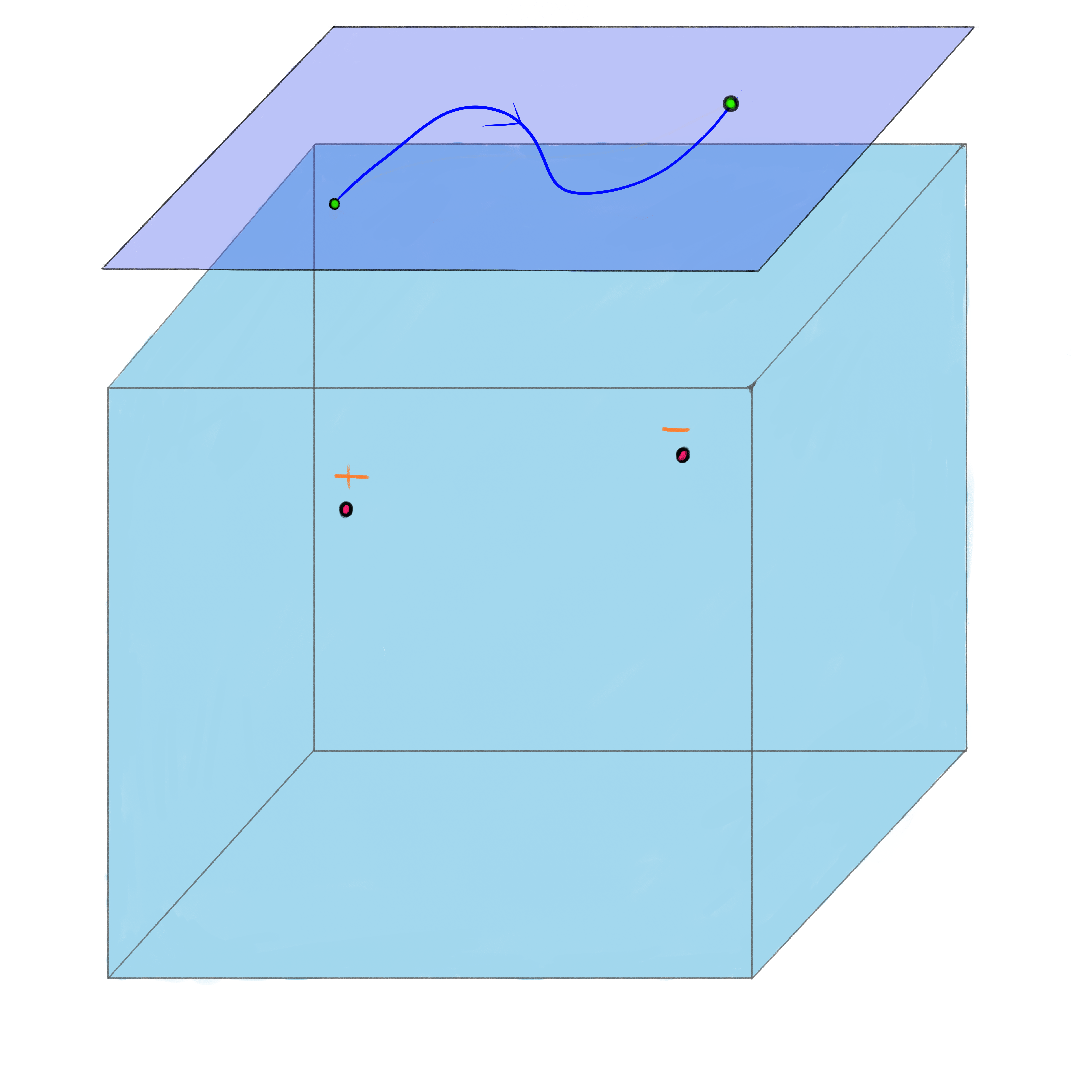}
 	\caption{Representation of a Weyl semimetal with two Weyl points, one of charge $+1$ and the other with $-1$. This is required by the charge cancellation condition. Above is a depiction of a Fermi Arc which occurs when a boundary is a continuous curve of states on the boundary joining the two Weyl points.}
 \end{center}
 \end{figure}
\newpage
 
 \subsection{The Mayer-Vietoris sequence in the general case}
 Here we investigate the Mayer-Vietoris sequence for Weyl submanifolds of general dimension in $\T^n$. A general two band model in dimension $n$ has the form
 $$H(k)=\sum_{i=1}^{N}h^i(k)\sigma_i$$ where $\sigma_n$ are representations of the generators of the complex Clifford Algebra. Such a system is essentially specified by the real-valued function $\mathbf{h}$ on $\T^n$ with components $h^i(k)$.
 
 One could consider more complicated systems, such as one with multiple bands, nonetheless, the following argument is purely topological and still applies. Assume that $\mathbf{h}$ has zeroes on some submanifold $W$ which decomposes into disjoint connected submanifolds $W_i$. Furthermore, we may surround each $W_i$ with a tubular neighbourhood $D_i$ with boundary $S_i$, importantly, these boundaries $S_i$ are of dimension $n-1$. Denote the disjoint union of these submanifolds by $W$, $D_W$ and $S_W$ respectively, then set $U=\T^n\setminus W$ and $V=D_W$. Applying the Mayer-Vietoris sequence then gives the following result
  
  \begin{proposition}
  	Let $W=\sqcup_{i}W_i$ and $D_i$ be as above, then the following sequence is exact
  	$$\dots\rightarrow H^p(\T^n)\rightarrow H^p(\T^n\setminus W)\oplus \bigoplus_{i}H^p(W_i)\rightarrow \bigoplus_{i}H^p(S_i)\xrightarrow{\Sigma}H^{p+1}(\T^n)\rightarrow0$$
  	where $\Sigma$ is the boundary map of the Mayer-Vietoris sequence.
  \end{proposition}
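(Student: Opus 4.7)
The plan is to apply the standard Mayer--Vietoris long exact sequence in singular cohomology to the open cover $\{U, V\}$ of $\T^n$, where $U = \T^n \setminus W$ and $V$ is an open tubular neighbourhood of $W$ obtained by slightly enlarging $D_W = \sqcup_i D_i$. Since $W$ is closed in $\T^n$, the set $U$ is open; $V$ is open by construction; and $V \supseteq W$ forces $U \cup V = \T^n$. Mayer--Vietoris then yields the long exact sequence
\[
\cdots \to H^p(\T^n) \to H^p(U) \oplus H^p(V) \to H^p(U \cap V) \xrightarrow{\Sigma} H^{p+1}(\T^n) \to \cdots
\]
in which $\Sigma$ is the connecting homomorphism.

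The second step is to identify the middle terms with those in the statement. Because $V = \sqcup_i V_i$ and each $V_i$ deformation retracts onto $W_i$, additivity of cohomology over disjoint unions together with homotopy invariance gives $H^p(V) \cong \bigoplus_i H^p(W_i)$. Similarly, $U \cap V = V \setminus W = \sqcup_i (V_i \setminus W_i)$, and each punctured tubular neighbourhood $V_i \setminus W_i$ deformation retracts onto the boundary sphere bundle $S_i$ of $D_i$, whence $H^p(U \cap V) \cong \bigoplus_i H^p(S_i)$. Substituting these identifications reproduces the sequence in the statement.

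The final $\to 0$, i.e.\ surjectivity of $\Sigma$, is most naturally obtained at the top end $p+1 = n$ of the sequence. Exactness at $H^{p+1}(\T^n)$ equates the cokernel of $\Sigma$ with the image of the next map $H^{p+1}(\T^n) \to H^{p+1}(\T^n \setminus W) \oplus \bigoplus_i H^{p+1}(W_i)$. For $p+1 = n$ both summands vanish: $H^n(\T^n \setminus W) = 0$ because $\T^n \setminus W$ is a non-compact $n$-manifold, and $H^n(W_i) = 0$ because $\dim W_i < n$. Hence the next arrow is zero and $\Sigma$ is surjective, closing the sequence with the terminal $\to 0$ relevant to the Weyl charge cancellation application.

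The only non-formal ingredient is geometric: one needs each $W_i \subset \T^n$ to admit a tubular neighbourhood $D_i$ whose interior deformation retracts onto $W_i$ while the complement $D_i \setminus W_i$ deformation retracts onto its boundary sphere bundle $S_i$. This is the content of the tubular neighbourhood theorem for smooth embedded submanifolds and is automatic under the standing assumption that $W$ decomposes into disjoint smooth closed connected submanifolds. I expect no real obstacle here; the remainder of the argument is a formal application of Mayer--Vietoris and homotopy invariance.
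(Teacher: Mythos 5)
Your proof is correct for the proposition as stated, but it covers different ground from the paper's own proof. The paper takes the exactness itself as the standard output of the Mayer--Vietoris sequence for the cover $\{\T^n\setminus W,\, D_W\}$ and spends its proof block entirely on the subsequent claim that, in the top degree $p=n-1$ and under the identifications $H^{n-1}(S_i)\cong\Z_{(i)}$ and $H^n(\T^n)\cong\Z$, the connecting map $\Sigma$ is the ``sum of charges'' map $(x_1,\dots,x_k)\mapsto\sum_i x_i$; it does this in de Rham cohomology by writing the boundary map with a partition of unity $\{\rho_U,\rho_V\}$ and computing $\int_{\T^n}\Sigma(\omega)=\int_{U\cap V}d(\rho_U\omega)=\sum_i\int_{S_i}\omega_i$ via Stokes' theorem. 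What your argument buys is precisely what the paper leaves implicit: the identification of $H^p(V)$ with $\bigoplus_i H^p(W_i)$ and of $H^p(U\cap V)$ with $\bigoplus_i H^p(S_i)$ by deformation retraction, and, more importantly, the justification of the terminal $\rightarrow 0$, which needs $H^n(\T^n\setminus W)=0$ (connected non-compact $n$-manifold) and $H^n(W_i)=0$ (since $\dim W_i<n$); you are also right to observe that this termination is genuinely available only at $p+1=n$, whereas the displayed sequence asserts it for all $p$ without comment. What your argument does not supply is the explicit description of $\Sigma$ as summation of the local charges, which is the content the paper actually uses downstream for the charge cancellation condition; to fully replace the paper's proof you would need to append that computation (or an equivalent argument identifying the connecting homomorphism on each summand with the canonical generator of $H^n(\T^n)$).
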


in particular  for $n>2$ we have
$$\dots \rightarrow H^{n-1}(\T^n)\xrightarrow{\iota} H^{n-1}(\T^n\setminus W)\xrightarrow{\iota} \bigoplus_{i}H^{n-1}(D_i)\xrightarrow{\Sigma}H^n(\T^n)\rightarrow 0 .$$ Since the $D_i$ are $n-1$ dimensional compact connected submanifolds of $\T^n$, it follows that $H^{n-1}(D_i)\cong \Z_{(i)}$ which we label with $i$ to keep track of the index $i$, we also have $H^n(\T^n)\cong \Z$. Under these isomorphisms, the map $\Sigma$ is equivalent to a map $\Sigma:\bigoplus_{i}\Z_{(i)}\rightarrow\Z$ defined by
$$\Sigma(x_1,\dots x_k)=\sum_{i=1}^kx_i.$$

To prove this it is easiest to see with deRham cohomology, which we may use since the last two remaining terms in the integral cohomology groups are torsion-free.

\begin{proof}
	Suppose that $\omega$ is an $n-1$ form representing a class in $H^{n-1}(U\cap V)$. Under the isomorphism $H^{n-1}(U\cap V)\cong H^{n-1}(S_W)$ it suffices to consider an $n-1$ form defined on $S_W$ which is the restriction of $\omega$ to the boundary $S_W=U\cap V$. Since $S_W$ decomposes as a disjoint union, the cohomology group can be written as a direct sum
	$$H^{n-1}(S_W)\cong \bigoplus_{i}H^{n-1}(S_i)$$
	so we may write $\omega=\sum_{i}\omega_i$ where each $\omega_i$ has support in $S_i$, the corresponding cohomology class is $([\omega_1],\dots[\omega_{\ell}])$. We shall further take the restriction of $\omega_i$ to the boundary $H^{n-1}(S_i)$ and assume its integral is 1 so that it is a generator for the $i$-th component of $H^{n-1}(S_W)$. Let $\rho_{U},\rho_{V}$ be a partition of unity subordinate to ${U,V}$. The boundary map of the Mayer-Vietoris sequence in deRham cohomology is defined by
	\begin{align*}
		\Sigma(\omega)=\begin{cases}
			& [-d(\rho_V\omega)] \text{ on U} \\
			& [d(\rho_U\omega)]\text{ on V}
		\end{cases}
	\end{align*}
this is a globally defined form on $\T^n$ supported in $U\cap V$, since $H_{\text{dR}}^n(\T^n)\cong\R$ with the isomorphism given by integration over $\T^n$, we compute
\begin{align*}
	\int_{\T^n}\Sigma(\omega)&=\int_{U\cap V}d(\rho_U\omega) \\
	&=\int_{\partial(U\cap V)}\rho_U\omega \\
	&=\int_{S_W}\rho_U\omega.
\end{align*}
However, $S_W$ is contained entirely in $V$ where $\rho_U=1$, hence
\begin{align*}
	\int_{S_W}\rho_U\omega&=\int_{S_W}\omega \\
	&=\sum_i\int_{S_i}\omega_i
\end{align*}
and so under the identification $[\omega]=([\omega_1],\dots,[\omega_{\ell}])$, we see that
$$\Sigma(\omega_1,\dots,\omega_{\ell})=\sum_{i}\omega_i.$$
\end{proof}
 \begin{remark}
 	Note that this result technically holds in deRham, not integral cohomology. Provided all the singular cohomology groups in the Mayer-Vietoris sequence are free, there is an injection $H^q(X)\hookrightarrow H^q(X)\otimes R\cong H_{dR}^q(X)$.
 	Consequently, we may choose the generators in the above proof to have integral periods, and thus represent elements in singular cohomology.
 \end{remark}
 \begin{corollary}
 	As a consequence of exactness of the above sequence, cohomological local charges in degree $n-1$ (elements of $H^{n-1}(D_i)$) arise from degree $n-1$ characteristic classes of vector bundles over $\\T^n\setminus W$ (elements of $H^{n-1}(\T^n\setminus W)$), if and only if the local cohomological charges are in the kernel of the boundary map. That is, the charges sum to zero.
 \end{corollary}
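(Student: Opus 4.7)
The plan is to derive the corollary immediately from the exactness of the Mayer-Vietoris sequence established in the preceding proposition; essentially no additional computation is required. The content of the statement is a direct reformulation of exactness at the middle term $\bigoplus_i H^{n-1}(S_i)$: a tuple of local cohomology classes $(x_1,\dots,x_\ell) \in \bigoplus_i H^{n-1}(S_i)$ lies in the image of the restriction map from $H^{n-1}(\T^n\setminus W)$ if and only if it lies in the kernel of the boundary map $\Sigma$. The proposition already identified $\Sigma$ with the integer-sum map $\Sigma(x_1,\dots,x_\ell) = \sum_i x_i$ under the natural isomorphisms $H^{n-1}(S_i) \cong \Z$ and $H^n(\T^n) \cong \Z$, so $\ker \Sigma$ is precisely the subgroup of tuples summing to zero.

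Concretely, the steps I would carry out are: first, restate the specialisation of the Mayer-Vietoris sequence to degree $p = n-1$ so that exactness at $\bigoplus_i H^{n-1}(S_i)$ is visible; second, invoke exactness to obtain the equality of the image of the restriction map with $\ker \Sigma$; and third, substitute the computation of $\Sigma$ as the sum map to rephrase the kernel condition as the charge cancellation condition.

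The one interpretive point worth articulating is the bridge between cohomology classes in $H^{n-1}$ and genuine characteristic classes of vector bundles over $\T^n \setminus W$. In the paradigmatic case $n = 3$ for complex line bundles the identification is clean: isomorphism classes of complex line bundles on a base are classified by $H^2$ via the first Chern class, so a tuple of local Chern classes on the surrounding spheres $S_i$ extends to a globally defined line bundle on $\T^3 \setminus W$ exactly when the corresponding cohomology classes lie in the image of the restriction map, which by exactness is iff their Chern numbers sum to zero. In higher dimensions the corollary is most cleanly read as a statement about cohomology classes themselves, regarded as candidate characteristic classes; whether a given class actually arises as a characteristic class of a bundle is a separate realisability question, not an obstacle to the cohomological statement. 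There is no serious obstacle to overcome: all of the content is contained in the earlier proposition, and the corollary is extracted mechanically by unpacking exactness.
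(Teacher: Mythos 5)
Your proposal is correct and matches the paper's treatment exactly: the corollary is stated without a separate proof precisely because it is the mechanical unpacking of exactness at $\bigoplus_i H^{n-1}(S_i)$ combined with the identification of $\Sigma$ as the sum map established in the preceding proposition. Your additional remark distinguishing cohomology classes from genuine characteristic classes of bundles is a sensible clarification consistent with how the paper invokes the classification of line bundles by $H^2$ in the $n=3$ case.
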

 
 If nodal lines had their own cohomological charge in degree $d-1$, then it would follow from the above proposition that they must satisfy a charge cancellation condition identical to that of Weyl semimetals. However, in the key application of interest, namely $d=3$, this is not the case. To obtain nodal lines in $3D$ generically requires one of the components of $\mathbf{h}$ to be zero, and so the corresponding map into the 2-sphere after normalisation cannot be surjective, and thus has degree zero. So the standard procedure to obtain a cohomological invariant does not work for nodal lines. On the other hand, the standard local invariant for nodal lines is the berry phase around a loop containing the nodal line, which is only well-defined modulo $2\pi$, and thus cannot be interpreted as an element of integral cohomology.
 \begin{figure}[h]
 	\begin{center}
 		\includegraphics[scale=0.25]{./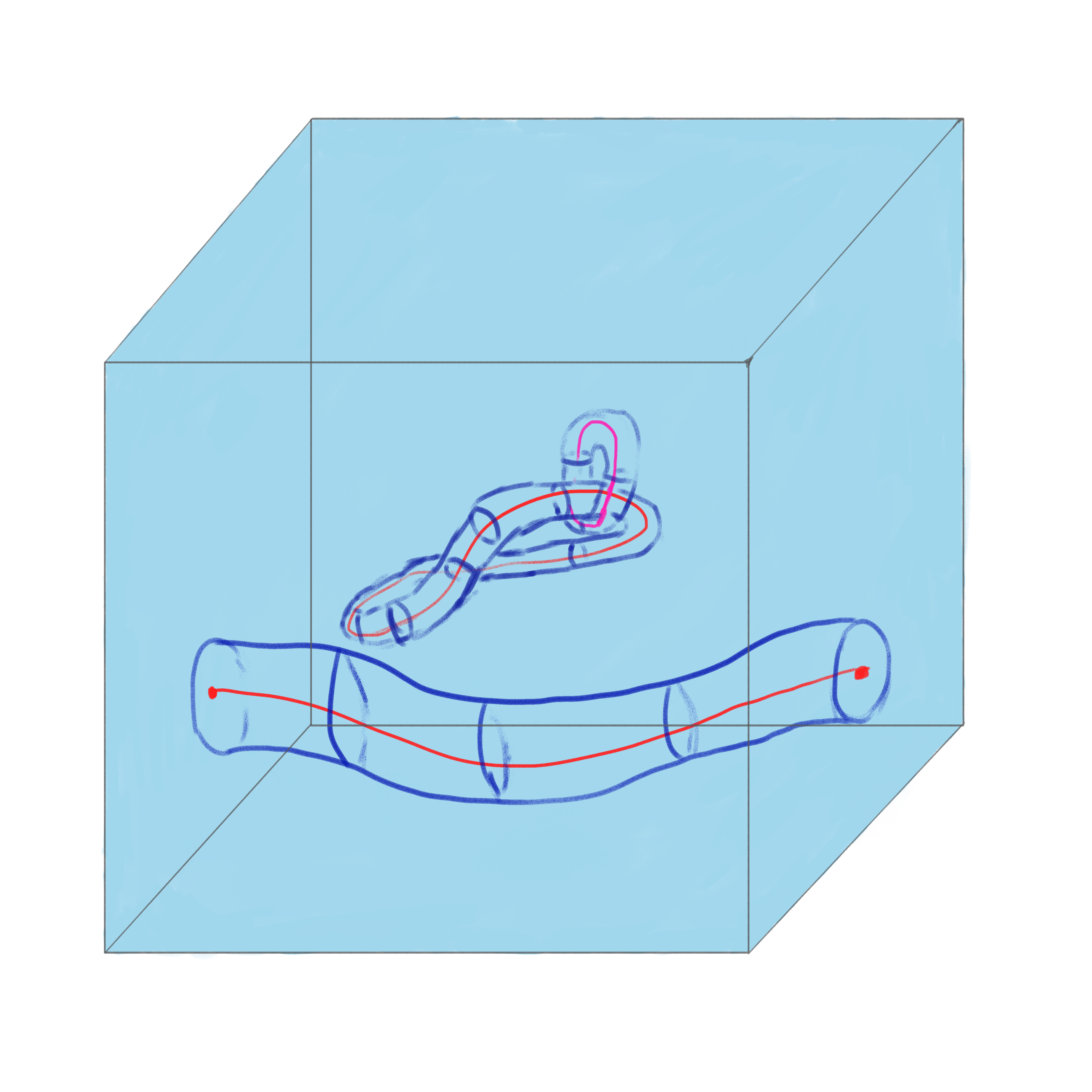}
 		\caption{Depiction of tubular neighbourhoods of a nodal link in the Brillouin torus.}
 	\end{center}
 \end{figure}
 \newpage
\section{Semimetallic phases with a real symmetry}
\subsection{Examples of Hamiltonians with a real symmetry}
One means to force the Weyl submanifold to have dimension 1 when the Brillouin zone is 3 dimensional, is to force the Hamiltonian to be real. As long as spin-orbit coupling is not considered, this can be achieved with space-time inversion symmetry $I_{ST}$. Further details on this symmetry and its connection to the Stiefel-Whitney classes is further discussed in \cite{avronTopologicalInvariantsFermi1988}. This an anti-unitary symmetry satisfying $I_{ST}^2=1$, and can be obtained in a 3D semimetal as $I_{ST}=P\circ T$ where $P$ spatial inversion and $T$ is time-reversal. Hence in position space this is simply the map $(t,\vec{r})\mapsto (-t,-\vec{r})$.

In the presence of space-time inversion symmetry, one can choose a basis in which it is represented as complex conjugation\cite{fangTopologicalNodalLine2015a,kimDiracLineNodes2015}. In such a case, the Hamiltonian and the states of the system are forced to be real. Consequently, one is left to classify \textit{real} vector bundles over $\T^n\setminus W$ as opposed to complex vector bundles. Moreover, since a two-band semimetal Hamiltonian generally has the form
$$H(k)=\mathbf{h}(k)\cdot\vec{\sigma}$$
in the presence of such a symmetry, coefficients of $\mathbf{h}$ which correspond to $\sigma_i$ which do not have purely real entries are forced to be zero. In the case of 3 dimensions, this forces $h^2=0$. Giving rise to the Hamiltonian
$$H(k)=h^1(k)\sigma_1+h^3(k)\sigma_3.$$
which generically has a 1 dimensional zero set, which in the most general case can be a knot or a link. As such we refer to these as \textit{knotted semimetals}, although they are most often referred to as nodal loop or nodal line semimetals in the physics literature.

Since the corresponding vector bundle over $\T^3\setminus W$ is real, the appropriate characteristic classes for their classification are the Stiefel-Whitney classes. These live in cohomology with $\Z_2$ coefficients and thus give rise to numbers which are $\Z_2$ valued when paired with homology classes. Just as with the pairing of Chern classes with distinguished homology classes to give rise to the local charges of Weyl points, it was shown in\cite{ahnStiefelWhitneyClasses2019} that the first and second Stiefel-Whitney classes give rise to two charges of nodal loops in momentum space, namely the quantized Berry phase and the $\Z_2$ monopole charge respectively. The first Stiefel-Whitney class can be paired with a loop that encloses a nodal line, and measures the non-orientability of the valence band restricted to this loop.

One should keep in mind that generically in a two-band system, the only invariant is the first Stiefel-Whitey class. To obtain a nodal semimetal with non-zero $\Z_2$-monopole charges, one must consider systems with more than two bands, for instance one described by a Hamiltonian of the form
$$H(k)=k_x\mathbf{1}\otimes\sigma_x+k_y\tau_y\otimes\sigma_y+k_z\mathbf{1}\otimes\sigma_z+m\tau_z\otimes\sigma_z$$
where the $\sigma_i,\tau_i$ are Pauli matrices. The spectrum is given by $E=\pm\sqrt{k_x^2+(\rho\pm|m|)^2}$ with $\rho=\sqrt{k_y^2+k_z^2}$. This yields a system with two linked nodal loops at $k_x=0$ and $\rho=0$ obtained from the crossing of two distinct sets of occupied bands.


The $\Z_2$ monopole charge is an obstruction to deforming the nodal loop to a point, which could then be further made to vanish, while mathematically, the second Stiefel-Whitney class of a vector bundle is an obstruction to defining a spin structure on said vector bundle, and is a cohomology class on the base space on the bundle.

\subsection{A proof of charge cancellation}

It is claimed that the $\Z_2$ monopole charge in particular has a mod 2 charge cancellation condition. However, in this section we provide a manifestly topological proof of this fact in dimension 3 by extending the Mayer-Vietoris sequence in integer cohomology to the $\Z_2$ case.

The idea is generally simple, namely to apply the well-known universal coefficient theorem to the integral cohomology sequence. However, there is subtelty required in justifying this application and using it to ensure an exact sequence of the appropriate cohomology groups, and that the resulting final boundary map is indeed the sum of charges modulo 2.

The following is a general statement of the universal coefficient theorem for free chain complexes \cite[Ch.\ 5,\ Thm.\ 10]{spanierAlgebraicTopology1995}
\begin{theorem}
	Let $C$ be a free chain complex such that $H(C)$ is of finite type or $G$ is finitely generated, then there is a functorial short exact sequence
	$$0\rightarrow H^q(C;R)\otimes G\xrightarrow{\mu}H^q(C;G)\rightarrow \Tor(H^{q+1}(C;R), G)\rightarrow 0 $$
	and this sequence also splits.
\end{theorem}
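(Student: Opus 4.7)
The plan is to adapt the standard proof of the universal coefficient theorem to the present indexing convention, where the $\Tor$ term sits in degree $q+1$ rather than the more familiar $q-1$ of the homological version. Regard $C$ as a cochain complex $C^\bullet$ with differential $d: C^n \to C^{n+1}$, write $Z^n = \ker d$ for the cocycles and $B^{n+1} = \operatorname{im} d$ for the coboundaries, so that there is a short exact sequence
$$0 \to Z^n \to C^n \xrightarrow{d} B^{n+1} \to 0.$$
Since $B^{n+1}$ is a subgroup of the free module $C^{n+1}$ it is itself free, so the sequence splits and in particular remains exact after tensoring with any $G$.

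Regard $Z^\bullet$ and the shifted complex $B^{\bullet+1}$ as cochain complexes with zero differential. The above sequences assemble into a short exact sequence of cochain complexes; tensoring with $G$ and passing to the long exact sequence in cohomology yields the five-term excerpt
$$B^n \otimes G \xrightarrow{\iota_n\otimes 1} Z^n \otimes G \to H^n(C;G) \to B^{n+1}\otimes G \xrightarrow{\iota_{n+1}\otimes 1} Z^{n+1}\otimes G,$$
where $\iota_k : B^k \hookrightarrow Z^k$ denotes inclusion. A direct chase (lift $b \in B^{n+1}$ to $c \in C^n$ with $dc = b$, then observe $dc \in Z^{n+1}$) confirms that the connecting homomorphism is precisely $\iota_{n+1}\otimes 1$.

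The outer terms are identified from a single observation: $0 \to B^n \to Z^n \to H^n(C;R) \to 0$ is a length-one free resolution of $H^n(C;R)$, since both $B^n$ and $Z^n$ are free. Right exactness of $-\otimes G$ identifies the cokernel of $\iota_n\otimes 1$ with $H^n(C;R)\otimes G$, while applying $-\otimes G$ to the analogous resolution in degree $n+1$ identifies the kernel of $\iota_{n+1}\otimes 1$ with $\Tor(H^{n+1}(C;R), G)$ directly from the definition of $\Tor_1$. This produces the required short exact sequence, and functoriality in both variables is automatic from the construction.

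Finally, to obtain the splitting we invoke the finiteness hypothesis. When $H(C)$ is of finite type or $G$ is finitely generated, the structure theorem for finitely generated abelian groups together with the additivity of $\Tor$ and $\otimes$ in direct sums reduces the construction of a section $H^q(C;R)\otimes G \to H^q(C;G)$ to the case of a single cyclic summand, where it can be written down by hand. The main subtle point is the identification of the connecting homomorphism and the resulting degree shift that puts $\Tor$ in degree $q+1$; the splitting itself is elementary once the sequence is in place, with the important caveat that it cannot be chosen naturally even though the sequence is.
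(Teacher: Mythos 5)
The paper does not actually prove this statement; it quotes it verbatim from Spanier (Ch.~5, Thm.~10), so your proposal has to be measured against the standard proof there rather than against anything in the text. There is a genuine gap: you have proved a different theorem. Your cycle/boundary decomposition and long-exact-sequence chase correctly establish the algebraic universal coefficient sequence for $H^n(C'\otimes G)$ when $C'$ is a free \emph{cochain} complex. But in the statement at hand $C$ is a free \emph{chain} complex and $H^q(C;G)$ means $H^q(\Hom(C,G))$ --- the paper says so explicitly right after the theorem, where $\mu$ is described as induced by $\Hom(A,G)\otimes G'\to\Hom(A,G\otimes G')$. To apply your argument one would have to take $C'=\Hom(C,R)$, and then two things fail in general: $\Hom(C,R)$ need not be free when $C$ is not finitely generated (e.g.\ $\Hom(\bigoplus_{\N}\Z,\Z)\cong\prod_{\N}\Z$ is not free), and the comparison map $\mu:\Hom(C,R)\otimes G\to\Hom(C,G)$ need not be a quasi-isomorphism. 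Showing that $\mu$ \emph{does} induce an isomorphism on cohomology is precisely where the hypothesis ``$H(C)$ of finite type or $G$ finitely generated'' enters (Spanier reduces to a chain-equivalent finitely generated free complex in the first case and to a finite presentation of $G$ in the second); your proof never touches this step. The hypothesis is not removable: for $C$ concentrated in degree $0$ with $C_0=\bigoplus_{\N}\Z$ and $G=\Q$ the asserted sequence fails, since $\bigl(\prod_{\N}\Z\bigr)\otimes\Q\subsetneq\prod_{\N}\Q$ while the $\Tor$ term vanishes.

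A secondary misstep: you locate the role of the finiteness hypothesis in the splitting. In the algebraic UCT the splitting is automatic and needs no finiteness at all --- the retraction $C^n\to Z^n$ obtained from the freeness of $B^{n+1}$ gives a chain map to the cohomology viewed as a complex with zero differential, and tensoring with $G$ splits the sequence for arbitrary $G$. So the finiteness hypothesis does its work in establishing the existence of the exact sequence (equivalently, that $\mu$ is the right map into $H^q(\Hom(C,G))$), not in the splitting, and your appeal to the structure theorem there is both unnecessary and insufficient to close the real gap.
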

Note that the map is induced by the map $\mu:\Hom(A,G)\otimes G'\rightarrow \Hom (A,G\otimes G')$ acting on singular cochain complexes and is given by
$$\mu(f\otimes g')(a):=f(a)\otimes g'$$
where $f\in \Hom(A,G),g\in G'$ and $a\in A$.

We apply this proposition in the case when $R=\Z$ and $G=\Z_2$. Under the condition that all the cohomology groups are free, the torsion term vanishes in the exact sequence. It follows that $H^q(C;\Z)\otimes \Z_2\cong H^q(C;\Z_2)$. Furthermore, taking the tensor product is functorial. It follows that tensoring a sequence of $\Z$-modules by $\Z_2$ produces another sequence of $\Z$-modules.with the maps induced by taking the tensor product with the identity on the $\Z_2$ factor.

It is of note, that the tensor product is not always an exact functor. The stereotypical example of this behaviour being the fact that the sequence
$$0\rightarrow Z\xrightarrow {\cdot 2x}\Z\rightarrow \Z_2\rightarrow0$$ is exact, but the tensoring this sequence by $\Z_2$ only gives that the sequence
$$ \Z_2\xrightarrow {0} \Z_2\xrightarrow{\text{id}} \Z_2\rightarrow 0$$
with the leftmost map no longer injective.

In order to justify exactness we shall use a fairly standard of result of commutative algebra, consider Proposition 1.2.6 of \cite{liuAlgebraicGeometryArithmetic2010a} from which a fairly elementary fact of commutative algebra follows, namely the following proposition
\begin{proposition}
	Let $0\rightarrow M_1\rightarrow M_2\rightarrow\dots\rightarrow M_k\rightarrow 0$ be a short exact sequence of free $\Z$-modules and $N$ an arbitrary $\Z$-module. Then the induced sequence given by tensoring with $N$
	$$0\rightarrow M_1\otimes N\rightarrow\dots \rightarrow M_k\otimes N\rightarrow 0$$
	is also exact.
\end{proposition}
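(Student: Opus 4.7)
The plan is to decompose the given long exact sequence into a collection of short exact sequences, use the fact that $\Z$ is a principal ideal domain to show each short sequence splits, tensor the split sequences with $N$, and then reassemble the pieces into a long exact sequence. The idea is to sidestep the non-exactness of $-\otimes N$ by producing exactness of each short sequence from a splitting, where no obstruction like a $\Tor$ term can appear.

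For the decomposition, I would write $f_i: M_i \to M_{i+1}$ for the differentials (with $f_0 = 0$ and $f_k = 0$) and set $K_i := \operatorname{im}(f_i) = \ker(f_{i+1})$. Exactness of the original sequence then yields short exact sequences
\begin{equation*}
0 \to K_{i-1} \longrightarrow M_i \longrightarrow K_i \to 0
\end{equation*}
for $i = 1, \dots, k$, with $K_0 = 0$ and $K_{k-1} = M_k$. The crucial observation is that since each $M_i$ is free over the PID $\Z$, every submodule of $M_i$ is itself free; in particular each $K_i$ (viewed inside $M_{i+1}$) is free. Freeness of the quotient $K_i$ then forces the sequence to split, and split short exact sequences are preserved by any additive functor, in particular by $-\otimes N$. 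This yields exactness of
\begin{equation*}
0 \to K_{i-1} \otimes N \longrightarrow M_i \otimes N \longrightarrow K_i \otimes N \to 0
\end{equation*}
for every $i$.

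The reassembly step is where the most care is needed, although I do not expect it to be a genuine obstacle. By functoriality, the differential $f_i \otimes \operatorname{id}_N$ in the tensored long sequence factors as $M_i \otimes N \twoheadrightarrow K_i \otimes N \hookrightarrow M_{i+1} \otimes N$, and its image inside $M_{i+1} \otimes N$ is precisely the copy of $K_i \otimes N$, which agrees with $\ker(f_{i+1} \otimes \operatorname{id}_N)$ coming from the next split sequence. Exactness of the long tensored sequence then follows termwise, with injectivity at the first term and surjectivity at the last coming from $K_0 = 0$ and $K_{k-1} = M_k$. The main point I would need to verify carefully is that the factorisation $f_i = \iota_i \circ \pi_i$ through $K_i$ commutes with $-\otimes N$ and that kernels and images at each position are matched with the correct copies of $K_i \otimes N$; these are routine consequences of functoriality and compatibility of the tensor product with direct sums.
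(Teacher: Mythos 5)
Your proposal is correct and follows essentially the same route as the paper: both decompose the long exact sequence into short exact sequences whose end terms are free (using that submodules of free $\Z$-modules are free), tensor each piece, and reassemble. You simply make explicit the details the paper leaves implicit, namely that freeness of the quotient forces a splitting (so no $\Tor$ obstruction can arise) and that the images and kernels of the tensored differentials match up under the factorisation through $K_i\otimes N$.
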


\begin{proof} 	Since the sequence is finite, one may break up the original long exact sequence into a series of short exact sequences, starting from the right. Since the right-most two modules are free, the left most entry is also free. Proceeding by induction yields the result.
\end{proof}

\begin{remark}
	This result holds more generally over for modules over any ring if one replaces 'free' with 'flat', although since we are only considering $\Z$-modules, the two notions are equivalent and we choose to work with the former for simplicity.
\end{remark}

It remains to show that each term in the Mayer-Vietoris sequence is free. Assume that $X$ is any compact, oriented topological space. It is known that $H^0(X)$ is simply a free $\Z$-module with rank counting the number of connected components of $X$ and is thus free \cite{hatcherAlgebraicTopology2002}.  The universal coefficient theorem for finitely generated abelian groups \cite{spanierAlgebraicTopology1995} gives a non-canonical isomorphism
$$H^p(X)\cong \Hom(H_p(X),\Z)\oplus \Ext(H_{p-1}(X),\Z)$$
from which it follows that $H^1(X)$ is free, since the Ext term is zero and the Hom term is always free, Poincar\'e duality gives $H^2(X)\cong H_1(X)$. 

	
Now $H_1(X)$ is not generally torsion free. However, for the spaces we are interested in this is the case. This fact essentially follows since $H_1$ is the abelianisation of the fundamental group. In the case of $\T^3\setminus W$, one picks up extra $\Z$ factors from each nodal loop in the fundamental group, still producing a free first homology group, and $U\cap V$ for nodal loops is homotopic to a disjoint union of 2-dimensional tori, each having $\Z^2$ as their first homology, and the union being their direct sum.

Finally, the only group of degree $3$ in the Mayer-Veitoris sequence in 3 dimensions is $H^3(\T^3)$, which is isomorphic to $H_0(\T^3)$ by Poincar\'e duality, hence torsion free also.

It follows from the above argument that the following sequence, obtained by taking the tensor product of the Mayer-Vietoris sequence in integral cohomology with $\Z_2$
$$\dots\xrightarrow{\partial} H^2(\T^3;\Z_2)\rightarrow H^2(\T^3\setminus W;\Z_2)\rightarrow\bigoplus_{i}H^2(\T^2_{(i)};\Z_2)\xrightarrow{\Sigma\text{ (mod 2)}}H^3(\T^3;\Z)\rightarrow 0$$
is exact.

 Furthermore, under the isomorphisms $\Z^k\otimes \Z_2\cong(\Z\otimes \Z_2)^k\cong \Z_2^k$ given by $(x_1,\dots,x_k)\otimes a\mapsto (x_1\otimes a,\dots ,x_k\otimes a)\mapsto (x_1 (\text{mod} 2),\dots x_k (\text{mod} 2))$.
 
 \begin{corollary}
 	Due to exactness elements of $\bigoplus_{i}H^2(\T^2_{(i)};\Z_2)$ come from the restriction of an element of $H^2(\T^3\setminus W;\Z)$, i.e. non-zero $\Z_2$ monopole charges arise from a vector bundle over $\T^3\setminus W$, if and only if they are in the kernel of the boundary map in the Mayer-Vietoris sequence. That is, if and only if the charges sum to zero modulo 2.
 \end{corollary}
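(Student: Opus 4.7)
The plan is to reduce the corollary entirely to the exactness of the $\Z_2$-coefficient Mayer-Vietoris sequence displayed immediately above, which the preceding discussion has already established via freeness of the cohomology groups involved, the tensor-product exactness proposition, and the universal coefficient theorem. What remains is essentially an identification argument: matching the $\Z_2$ monopole charges with elements of $\bigoplus_i H^2(\T^2_{(i)};\Z_2)$, and confirming that the final map of the sequence is the entrywise sum modulo $2$ of those charges.

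For the first identification I would invoke the result of \cite{ahnStiefelWhitneyClasses2019} already recalled earlier: the $\Z_2$ monopole charge of the $i$-th nodal loop is precisely the second Stiefel-Whitney class of the valence bundle restricted to the torus $\T^2_{(i)}$ surrounding the loop, paired with the fundamental class. Hence a collection of local monopole charges arises as the restriction of a globally defined characteristic class on $\T^3\setminus W$ if and only if the corresponding element of $\bigoplus_i H^2(\T^2_{(i)};\Z_2)$ lies in the image of the restriction map induced by inclusion. For the second, I would use naturality of the tensor product together with the integer statement already proved: the integer boundary map $\Sigma:\bigoplus_i \Z_{(i)}\to\Z$ is $(x_1,\dots,x_k)\mapsto \sum_i x_i$, so tensoring with $\Z_2$ and applying the iterated isomorphism $\Z^k\otimes \Z_2\cong \Z_2^k$ spelled out just before the corollary yields the map $(x_1,\dots,x_k)\mapsto \sum_i x_i \pmod 2$.

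The main obstacle I anticipate is precisely the compatibility check between these algebraic and geometric descriptions: verifying that the tensored boundary map $\Sigma\otimes \mathrm{id}_{\Z_2}$ coincides, under the canonical isomorphisms with $\Z_2$-coefficient cohomology supplied by the map $\mu$ of the universal coefficient theorem, with the actual boundary map of the $\Z_2$-coefficient Mayer-Vietoris sequence. Once that naturality square is seen to commute --- which follows in turn from the functoriality asserted in the universal coefficient theorem with respect to change of coefficients --- exactness at $\bigoplus_i H^2(\T^2_{(i)};\Z_2)$ immediately equates the image of the restriction map with the kernel of $\Sigma\pmod 2$, yielding the claimed charge cancellation condition.
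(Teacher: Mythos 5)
Your proposal is correct and follows essentially the same route as the paper: the corollary is read off directly from exactness of the $\Z_2$-coefficient sequence obtained by tensoring the integral Mayer--Vietoris sequence with $\Z_2$, together with the identification of the boundary map as the entrywise sum modulo $2$ under $\Z^k\otimes\Z_2\cong\Z_2^k$. Your extra attention to the compatibility of $\Sigma\otimes\mathrm{id}_{\Z_2}$ with the genuine $\Z_2$-coefficient boundary map via the functoriality of the universal coefficient theorem is a point the paper leaves implicit, but it is the same argument.
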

 
 \newpage
 \printbibliography
 
\end{document}